\newcommand{\ket}[1]{| #1 \rangle}
\newcommand{\bra}[1]{\langle #1|}
\newcommand{\braket}[1]{\langle #1 \rangle}
\newcommand{\proj}[1]{| #1 \rangle \langle #1 |}
\DeclareMathOperator{\polylog}{polylog}
\DeclareMathOperator{\tr}{tr}
\newcommand{\be}{\begin{equation}}
\newcommand{\ee}{\end{equation}}
\newcommand{\bea}{\begin{eqnarray}}
\newcommand{\eea}{\end{eqnarray}}
\newcommand{\bes}{\begin{equation*}}
\newcommand{\ees}{\end{equation*}}
\newcommand{\beas}{\begin{eqnarray*}}
\newcommand{\eeas}{\end{eqnarray*}}
\newtheorem*{rep@theorem}{\rep@title}
\newcommand{\newreptheorem}[2]{%
\newenvironment{rep#1}[1]{%
 \def\rep@title{#2 \ref{##1} (restated)}%
 \begin{rep@theorem}}%
 {\end{rep@theorem}}}
\newtheorem{thm}{Theorem}
\newtheorem*{thm*}{Theorem}
\newtheorem{lem}[thm]{Lemma}
\newtheorem*{lem*}{Lemma}
\newcommand{\FIXME}[1]{}
\begin{document}


\title{Pretty simple bounds on quantum state discrimination}
\author{Ashley Montanaro\thanks{School of Mathematics, University of Bristol, UK; {\tt ashley.montanaro@bristol.ac.uk}.}}
\maketitle

\begin{abstract}
We show that the quantum measurement known as the pretty good measurement can be used to identify an unknown quantum state picked from any set of $n$ mixed states that have pairwise fidelities upper-bounded by a constant below 1, given $O(\log n)$ copies of the unknown state, with high success probability in the worst case. If the unknown state is promised to be pure, there is an explicit measurement strategy which solves this worst-case quantum state discrimination problem with $\widetilde{O}(\|G\|)$ copies, where $G$ is the Gram matrix of the states.
\end{abstract}


\section{Introduction}

A fundamental task in quantum information theory is {\em quantum state discrimination}. Here we will be concerned with the following variant of this problem: We are given an unknown state $\rho$ picked from a known set $S = \{ \rho_i \}$ of mixed states, where $|S|=n$, and our task is to identify $\rho$ with the lowest possible worst-case probability $\delta$ of failure. That is, we want to find a quantum measurement (POVM), described by a set of positive semidefinite operators $\mu_i$ with $\sum_i \mu_i = I$, such that $\max_i 1 - \tr \mu_i \rho_i$ is minimised. This task has been termed ``minimax'' quantum state discrimination~\cite{dariano05}, though here we will refer to it as {\em worst-case} quantum state discrimination. We will also consider the closely related question where we are given $\delta > 0$ in advance, and would like to determine the number $k$ of copies of $\rho$ that are required to achieve failure probability $\delta$ by performing a measurement on $\rho^{\otimes k}$.

Quantum state discrimination has been a topic of intensive study within quantum information theory (see~\cite{barnett09,bergou10,bae15} for reviews), although the majority of works consider the setting where each state $\rho_i$ is produced with a known probability $p_i$. However, in the context of a quantum algorithm which should have a low worst-case probability of failure, worst-case discrimination is often the most natural setting.

In the worst-case setting, it was shown by Harrow and Winter~\cite{harrow12a} that, if all states in $S$ have pairwise fidelities $F(\rho_i,\rho_j) := \| \sqrt{\rho_i} \sqrt{\rho_j}\|_1$ upper-bounded by $F$ (where $0 < F < 1$), then the worst-case state discrimination problem can be solved with $O(\log(n/\delta)/\log(1/F))$ copies of $\rho$. This result is nonconstructive, and the proof proceeds via a minimax theorem. That is, existence of a measurement that solves the worst-case state discrimination problem is shown, without describing the measurement explicitly.

Here we show that there is an explicit measurement, the pretty good measurement~\cite{belavkin75,holevo79,hausladen94} (``PGM'', defined below) which achieves a similar scaling of the number of copies:

\begin{thm}
\label{thm:mixed}
If $F(\rho_i, \rho_j) \le 1-\epsilon$ for all pairs of distinct states $\rho_i,\rho_j \in S$, then the worst-case state discrimination problem can be solved with failure probability $\delta$ by applying the PGM to $O(\log (n/\delta)/\epsilon)$ copies of $\rho$.
\end{thm}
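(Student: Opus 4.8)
The plan is to analyse the pretty good measurement for the uniform ensemble $\{\rho_i^{\otimes k}\}_{i=1}^n$ and to choose $k = O(\log(n/\delta)/\epsilon)$. Write $\sigma_i := \rho_i^{\otimes k}$, $P := \sum_j \sigma_j$, and take the PGM operators $\mu_i := P^{-1/2}\sigma_i P^{-1/2}$, with all inverses taken on $\mathrm{supp}(P)$, so that $\sum_i \mu_i$ is the projector onto $\mathrm{supp}(P)\supseteq\mathrm{supp}(\sigma_i)$. The worst-case failure probability is $\max_i (1-\tr(\mu_i\sigma_i))$, so it suffices to lower bound $\tr(\mu_i\sigma_i)$ uniformly in $i$. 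First I would record the exact identity
\[
1 - \tr(\mu_i\sigma_i) = \sum_{j\ne i}\tr(\mu_j\sigma_i) = \sum_{j\ne i}\tr\left(P^{-1/2}\sigma_j P^{-1/2}\sigma_i\right),
\]
which holds because $\sum_j\mu_j$ acts as the identity on $\mathrm{supp}(\sigma_i)$. This reduces the problem to bounding each ``confusion'' term $\tr(\mu_j\sigma_i) = \|\sqrt{\sigma_j}\,P^{-1/2}\sqrt{\sigma_i}\|_2^2$.

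\textbf{The crux} is to show that each confusion term is controlled by the pairwise fidelity, i.e.\ an inequality of the form $\tr(\mu_j\sigma_i)\le F(\sigma_i,\sigma_j)$ (or via $\tr(\sqrt{\sigma_i}\sqrt{\sigma_j})\le F(\sigma_i,\sigma_j)$), which would give $1 - \tr(\mu_i\sigma_i)\le\sum_{j\ne i}F(\sigma_i,\sigma_j)$. I expect this to be the main obstacle. The obvious first attempts all fail here: the Hayashi--Nagaoka operator inequality $I - \mu_i\preceq 2(I-\sigma_i) + 4\sum_{j\ne i}\sigma_j$ applied with the states themselves yields a term $2\tr((I-\sigma_i)\sigma_i)=2(1-(\tr\rho_i^2)^k)$, which tends to $2$ for any mixed $\rho_i$ as $k$ grows; and crude operator bounds such as $P^{-1/2}\sigma_j P^{-1/2}\preceq I$ (valid since $\sigma_j\preceq P$) merely give the useless $\tr(\mu_j\sigma_i)\le 1$ because they discard the overlap between $\sigma_i$ and $\sigma_j$. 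The argument must therefore keep $\sqrt{\sigma_i}$ and $\sqrt{\sigma_j}$ adjacent and depend on the pair only through its fidelity, with no leakage of individual purities $\tr\sigma_i^2$ or of $\|P\|$. The route I would pursue is to split $P^{-1/2}$ between the two square roots and use the operator-monotonicity facts $P^{-1/2}\preceq\sigma_i^{-1/2}$ and $P^{-1/2}\preceq\sigma_j^{-1/2}$ (both from $P\succeq\sigma_i,\sigma_j$ on the relevant support), combined with a Schatten--Hölder split of $\|\cdot\|_2$, so that the surviving quantity is of the form $\tr(\sqrt{\sigma_i}\sqrt{\sigma_j})$; making this split leak-free is exactly where the real work lies, and it is equally acceptable to invoke a known pairwise-fidelity bound on PGM error, the contribution then being its clean application.

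Granting the pairwise-fidelity bound, the remainder is routine. Fidelity is multiplicative under tensor products, so for all $j\ne i$,
\[
F(\rho_i^{\otimes k},\rho_j^{\otimes k}) = F(\rho_i,\rho_j)^k\le(1-\epsilon)^k\le e^{-\epsilon k}
\]
by hypothesis. Hence, uniformly in $i$,
\[
1 - \tr(\mu_i\sigma_i)\le\sum_{j\ne i}F(\rho_i,\rho_j)^k\le (n-1)\,e^{-\epsilon k}\le n\,e^{-\epsilon k}.
\]
Choosing $k = \lceil \ln(n/\delta)/\epsilon\rceil = O(\log(n/\delta)/\epsilon)$ makes the right-hand side at most $\delta$, so the PGM applied to $\rho^{\otimes k}$ attains worst-case failure probability $\le\delta$, as claimed. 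I would finish by checking the support and pseudoinverse technicalities (the opening identity only requires $\sum_j\mu_j$ to fix $\mathrm{supp}(\sigma_i)$) and confirming that the hypothesis $F(\rho_i,\rho_j)\le 1-\epsilon$ indeed supplies a base strictly below $1$ for the geometric decay.
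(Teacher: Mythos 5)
Your reduction and endgame are exactly the paper's: pass to $\sigma_i=\rho_i^{\otimes k}$, bound the worst-case PGM error by a sum of pairwise fidelities, use multiplicativity $F(\rho_i^{\otimes k},\rho_j^{\otimes k})=F(\rho_i,\rho_j)^k\le e^{-\epsilon k}$, and take $k=O(\log(n/\delta)/\epsilon)$. But the step you yourself label ``the crux'' --- that the PGM confusion probabilities are controlled by pairwise fidelities --- is precisely the content of the paper's Lemma \ref{lem:pe}, and you do not prove it. The route you sketch (splitting $P^{-1/2}$ via the antimonotonicity of $x\mapsto x^{-1/2}$ and a Schatten--H\"older split) is not shown to close: as you partly anticipate, inserting $P^{-1/2}\preceq\sigma_j^{-1/2}$ into $\tr(\sqrt{\sigma_i}P^{-1/2}\sigma_jP^{-1/2}\sqrt{\sigma_i})$ collapses $\sqrt{\sigma_j}\,\sigma_j^{-1}\sqrt{\sigma_j}$ to a projector and returns the useless bound $\tr(\mu_j\sigma_i)\le 1$. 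Deferring to ``a known pairwise-fidelity bound on PGM error'' is also not leak-free as stated: the standard Barnum--Knill bound is for the \emph{average} error under a prior, and lifting it to the worst case is exactly the issue this paper addresses (Harrow and Winter needed a nonconstructive minimax argument to do so). A term-by-term bound $\tr(\mu_j\sigma_i)\le F(\sigma_i,\sigma_j)$ is in any case stronger than what is needed.

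For comparison, the paper's Lemma \ref{lem:pe} does not bound individual confusion terms at all. It writes the Gram matrix of the (weighted) eigenvectors as $G=\Lambda+\Delta$ with $\Lambda$ the block-diagonal part, observes that the worst-case error is at most the \emph{total} off-diagonal weight $\sum_{i\neq j}\|\sqrt{G}^{(ij)}\|_2^2\le\|\sqrt{G}-\sqrt{\Lambda}\|_2^2$, and then invokes the unitarily-invariant-norm inequality $\|\sqrt{A}-\sqrt{B}\|\le\|\sqrt{|A-B|}\|$ to get $\|\sqrt{G}-\sqrt{\Lambda}\|_2^2\le\|\Delta\|_1\le\sum_{i\neq j}\|G^{(ij)}\|_1=\sum_{i\neq j}F(\rho_i,\rho_j)$. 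The trivial but essential observation that the maximum row sum is bounded by the total sum is what turns this into a worst-case statement for free. Without this lemma (or an equivalent substitute), your argument has a genuine gap at its central step.
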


Next we show that, in the case of pure states, this result can sometimes be improved to a scaling of the number of copies required which does not depend on $n$. In the case where all states in $S$ are pure, write $\rho_i = \proj{\psi_i}$, and let $G$ be the Gram matrix of these vectors, i.e.\ $G_{ij} = \braket{\psi_i|\psi_j}$. Then:

\begin{thm}
\label{thm:pure}
If all states in $S$ are pure, then the worst-case state discrimination problem can be solved with success probability at least $\|G\|^{-1}$ by applying the PGM to one copy of $\rho$. If additionally $\tr \rho_i \rho_j \le 1-\epsilon$ for all pairs of distinct states in $S$, then there is an explicit measurement strategy which, applied to
\[ O((\|G\|/ \epsilon) (\log 1/\delta) \log(\|G\|/\delta)) = \widetilde{O}(\|G\|/\epsilon) \]
 copies of $\rho$, solves the worst-case state discrimination problem with failure probability $\delta$.
\end{thm}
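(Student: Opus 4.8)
The plan is to establish the two parts separately. For the first claim, I would analyze the PGM directly on a single copy. Recall the PGM for pure states $\ket{\psi_i}$ uses measurement operators $\mu_i = S^{-1/2}\proj{\psi_i}S^{-1/2}$, where $S = \sum_i \proj{\psi_i}$ (restricting to the span of the states, so $S^{-1/2}$ denotes the pseudoinverse). The success probability for state $i$ is $\bracket{\psi_i}{\mu_i}{\psi_i} = |\bracket{\psi_i}{S^{-1/2}}{\psi_i}|^2$. The key observation is that $S$, expressed in the (non-orthonormal) frame of the $\ket{\psi_i}$, has the same nonzero spectrum as the Gram matrix $G$, and one can show $\bracket{\psi_i}{S^{-1/2}}{\psi_i} \ge \|G\|^{-1/2}$ by bounding $S \preceq \|G\| \cdot \Pi$ on the relevant support, so that $S^{-1/2} \succeq \|G\|^{-1/2}\Pi$. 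Squaring gives worst-case success probability at least $\|G\|^{-1}$, uniformly over $i$, which is exactly what is needed for the worst-case (not average-case) guarantee.

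For the second claim, the natural approach is amplification: boost the modest single-copy success probability $\|G\|^{-1}$ to $1-\delta$ by working with several copies. The subtlety is that the relevant quantity for the tensor-power states $\ket{\psi_i}^{\otimes m}$ is the norm of the $m$-th Hadamard (entrywise) power of $G$, since the Gram matrix of the $\ket{\psi_i}^{\otimes m}$ has entries $\braket{\psi_i|\psi_j}^m$. First I would apply the single-copy bound of Theorem~\ref{thm:pure} to blocks of $m$ copies, giving success probability at least $\|G^{\circ m}\|^{-1}$ per block, where $G^{\circ m}$ is the Hadamard power. The crucial estimate is to control $\|G^{\circ m}\|$ in terms of $\|G\|$ and the separation $\epsilon$: using $|G_{ij}| = |\braket{\psi_i|\psi_j}| \le \sqrt{1-\epsilon}$ for $i \ne j$ together with $G^{\circ m}_{ii}=1$, one can write $G^{\circ m} = I + E$ where $E$ has entries bounded by $(1-\epsilon)^{m/2}$, and bound $\|E\|$ by, e.g., its Frobenius norm $n(1-\epsilon)^{m/2}$ or a more careful argument relating it back to $\|G\|$. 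Taking $m = O((\log(n)+\log\|G\|)/\epsilon)$ (or a comparable choice) forces $\|G^{\circ m}\| \le 1 + 1/\poly$, hence the per-block success probability is $1 - o(1)$, in fact bounded below by a constant strictly above $1/2$.

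Once a single block of $m$ copies succeeds with probability at least some constant $c > 1/2$ in distinguishing all $n$ states, I would amplify to failure probability $\delta$ by repeating the block measurement $O(\log(1/\delta))$ times independently and taking a majority vote over the $n$ possible labels. A Chernoff bound shows $O(\log(1/\delta))$ repetitions suffice for the majority to be correct with probability $1-\delta$, uniformly over which state was given — this preserves the worst-case guarantee because the per-block bound already held in the worst case. Multiplying the block size $m = O((\log\|G\| + \log n)/\epsilon)$ by the number of repetitions $O(\log(1/\delta))$ yields the stated copy count; here I would need the $\log n$ appearing in $m$ to be absorbed into the $\log(\|G\|/\delta)$ factor, which requires the mild observation that $n \le \|G\|^{O(1)}$ up to the relevant accuracy, or alternatively a direct bound of $\|G^{\circ m}\|$ that avoids an explicit $n$ dependence.

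The main obstacle I anticipate is the estimate on $\|G^{\circ m}\|$, i.e.\ showing that raising the off-diagonal entries to a high Hadamard power collapses the operator norm to near $1$ at the right rate in $\epsilon$, and doing so with a dependence on $\|G\|$ rather than $n$ in the exponent. The crude Frobenius bound introduces an $n$-dependence that must be reconciled with the final $\widetilde{O}(\|G\|/\epsilon)$ claim; getting a clean $\|G\|$-only bound — perhaps via Schur's theorem that $\|G^{\circ m}\| \le \|G\|^{m}$ combined with the diagonal-dominance structure, or via a norm interpolation — is the technically delicate step. Everything else (the single-copy PGM analysis and the Chernoff amplification) is routine by comparison.
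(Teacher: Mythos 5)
Your first part is correct and matches the paper's route: the paper derives the success probability $\tr \mu_i \rho_i \ge \|G\|^{-1} \tr \rho_i^2 = \|G\|^{-1}$ from Lemma \ref{lem:bound1}, whose proof is exactly the spectral comparison you describe ($\Sigma \preceq \|G\|\,\Pi$ on the support, equivalently $G \le \lambda_{\max}(\sqrt{G})\sqrt{G}$, hence $\sqrt{G}^{(ii)}$ is bounded below).

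Your second part takes a genuinely different route from the paper, and it has a real gap at exactly the step you flag: you need $\|G^{\circ m}\|$ to collapse to a constant with $m$ depending only on $\|G\|$ and $\epsilon$, and none of the tools you propose delivers this as stated. The Frobenius bound forces $m = \Omega(\log n/\epsilon)$, and the rescue ``$n \le \|G\|^{O(1)}$'' is false in general (take $n$ orthonormal states: $\|G\| = 1$ for every $n$). Schur's product theorem in its sharp form gives $\|A \circ B\| \le \|A\| \max_i B_{ii}$, hence only $\|G^{\circ m}\| \le \|G\|$, which does not decay towards $1$; the form you quote, $\|G^{\circ m}\| \le \|G\|^m$, grows with $m$ and is useless here. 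So the proof is incomplete as written. The gap is in fact fillable: for Hermitian matrices the operator norm is at most the maximum absolute row sum, so
\[ \|G^{\circ m} - I\| \le \max_i \sum_{j \ne i} |G_{ij}|^m \le (1-\epsilon)^{(m-2)/2} \max_i \sum_{j\ne i} |G_{ij}|^2 \le (1-\epsilon)^{(m-2)/2} \|G\|^2, \]
whence $m = O(\log(\|G\|)/\epsilon)$ already gives per-block success probability at least $2/3$, and your majority-vote amplification would then use only $O((\log \|G\|)(\log 1/\delta)/\epsilon)$ copies --- exponentially better in $\|G\|$ than the stated bound. But you did not supply this estimate, and without it the argument does not close. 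The paper sidesteps the whole issue with a two-stage protocol: run the single-copy PGM independently $k = \lceil \|G\| \ln(2/\delta)\rceil$ times so that the true index appears in the outcome list except with probability $(1-\|G\|^{-1})^k \le \delta/2$, then verify each candidate $i_j$ with $l = \lceil \ln(2k/\delta)/\epsilon\rceil$ projections onto $\ket{\psi_{i_j}}$, which accept the true state with certainty and any other state with probability at most $(1-\epsilon)^l \le \delta/(2k)$. This requires no control of Hadamard powers of $G$ and gives the stated $O((\|G\|/\epsilon)(\log 1/\delta)\log(\|G\|/\delta))$ copy count.
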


In Theorem \ref{thm:pure} and throughout, we use $\widetilde{O}(f(\|G\|,\epsilon,\delta))$ to denote $O(f(\|G\|, \epsilon, \delta) \polylog(\|G\|,1/\epsilon,1/\delta))$.

Thus the operator norm of $G$ bounds the success probability and the number of copies required to solve the worst-case quantum state discrimination problem. To gain some intuition for this result, note that if all the states in $S$ are orthogonal, $G$ is the identity matrix, so $\|G\| = 1$; whereas if all the states in $S$ are equal, $G_{ij} = 1$ and $\|G\| = n$.

Theorem \ref{thm:pure} can be applied, for example, to random pure states picked from a variety of distributions. It often holds that, for random states in $d$ dimensions and with $n=O(d)$, $\|G\| = O(1)$ with high probability. Indeed, this holds for any states whose amplitudes with respect to an arbitrary basis are close to iid random variables with suitably bounded 4th moments~\cite{yin88,bai99}. Examples are Haar-random pure states and states of the form $\frac{1}{\sqrt{d}} \sum_{i=1}^d z_i \ket{i}$, where $z_i$ is uniformly randomly chosen from $\{\pm 1\}$. Upper bounds were proven on the probability of failure of discriminating these ensembles of states in~\cite{montanaro07a} in the case where there is a uniform probability distribution on the states in $S$. Theorem \ref{thm:pure} extends this to a worst-case setting.


\section{Definitions}

The pretty good measurement~\cite{belavkin75,holevo79,hausladen94} (PGM), also known as the square-root measurement~\cite{hausladen96}, is defined as follows: for each state $\rho_i$ we introduce a measurement operator $\mu_i = \Sigma^{-1/2} \rho_i \Sigma^{-1/2}$, where $\Sigma := \sum_i \rho_i$ and the inverse is taken on the support of $\Sigma$. This is a valid POVM as
\[ \sum_i \mu_i = \sum_i \Sigma^{-1/2} \rho_i \Sigma^{-1/2} = \Sigma^{-1/2} \left( \sum_i \rho_i \right) \Sigma^{-1/2} = I, \]
where the identity operator is with respect to the span of the states in the support of $S$. Note that often this measurement is defined in terms of states $\rho_i$ normalised by some a priori probabilities $p_i$, but here these are not used.

Write $\rho_i = \sum_j \lambda_{ij} \proj{\psi_{ij}}$ for the eigendecomposition of $\rho$, and let $G$ be the Gram matrix of the weighted states $\{\sqrt{\lambda_{ij}} \ket{\psi_{ij}}\}$. $G$ has a natural block structure in terms of $i$. If we define the vectors $\ket{\mu_{ij}} = \Sigma^{-1/2} \sqrt{\lambda_{ij}} \ket{\psi_{ij}}$ and the positive semidefinite matrix $P_{ik,jl} = \sqrt{\lambda_{jl}} \braket{\mu_{ik}|\psi_{jl}}$, then
\beas
(P^2)_{ik,jl} &=& \sum_{r,s} \sqrt{\lambda_{ik}} \lambda_{rs} \sqrt{\lambda_{jl}} \braket{\psi_{ik}|\Sigma^{-1/2}|\psi_{rs}}\braket{\psi_{rs}|\Sigma^{-1/2}|\psi_{jl}}\\
&=& \sqrt{\lambda_{ik}} \sqrt{\lambda_{jl}} \bra{\psi_{ik}} \Sigma^{-1/2}\left( \sum_{r,s} \lambda_{rs} \ket{\psi_{rs}} \bra{\psi_{rs}} \right) \Sigma^{-1/2}\ket{\psi_{jl}}\\
&=& G_{ik,jl}.
\eeas
Thus the probability that the PGM outputs $i$ on input $\rho_j$ is
\[ \tr \mu_i \rho_j = \tr \left( \sum_k \proj{\mu_{ik}}\right) \left( \sum_l \lambda_{jl} \proj{\psi_{jl}}\right)  = \sum_{k,l} \lambda_{jl} |\braket{\mu_{ik}|\psi_{jl}}|^2 = \|P^{(ij)}\|_2^2 = \|\sqrt{G}^{(ij)}\|_2^2, \]
where we use $P^{(ij)}$ to denote the $(i,j)$'th block of $P$ and $\|M\|_2^2 := \sum_{i,j} |M_{ij}|^2$. Write
\[ P_E(S) := \max_i 1 - \tr \mu_i \rho_i = \max_i 1 - \|\sqrt{G}^{(ii)}\|_2^2\]
for the worst-case probability of error when the PGM is used.


\section{Worst-case bounds for mixed states}

It was shown in~\cite{harrow12a}, based on a bound of~\cite{barnum02}, that $O(\log n)$ copies are required to identify a state picked from an arbitrary set of $n$ states with pairwise fidelities bounded above by a constant $F<1$. The result of~\cite{harrow12a} states that there exists a measurement that achieves this complexity, without describing that measurement explicitly. The reason is that the result of~\cite{barnum02} is stated in terms of a known probability distribution on the states, and~\cite{harrow12a} uses a minimax theorem to lift this to a worst-case bound. Here we show that the PGM itself achieves such a bound. The proof is directly analogous to that of~\cite{barnum02} in terms of considering Hilbert-Schmidt norms of off-diagonal blocks of $\sqrt{G}$, though it proceeds via a slightly different route.


\begin{lem}
\label{lem:pe}
\[ P_E(S) \le \sum_{i \neq j} F(\rho_i,\rho_j). \]
\end{lem}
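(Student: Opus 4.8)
The plan is to bound the worst-case error probability $P_E(S) = \max_i (1 - \|\sqrt{G}^{(ii)}\|_2^2)$ by controlling, for each fixed index $i$, the total Hilbert-Schmidt mass of $\sqrt{G}$ that lands \emph{off} the diagonal block. The first step is to observe that since the rows of $\sqrt{G}$ have unit $\ell_2$-norm (because $\sqrt{G}$ is a square root of the Gram matrix of normalized weighted states, so its diagonal entries, i.e.\ the squared row norms of $\sqrt{G}$, sum appropriately), the quantity $1 - \|\sqrt{G}^{(ii)}\|_2^2$ equals the total squared mass $\sum_{j \neq i} \|\sqrt{G}^{(ij)}\|_2^2$ of the off-diagonal blocks in the rows belonging to state $i$. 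Concretely, $1 = \sum_j \|\sqrt{G}^{(ij)}\|_2^2$ for each $i$ (using $\tr \mu_i \rho_j$ summed over $j$, or equivalently that $(\sqrt{G})^2 = G$ has unit-trace diagonal blocks when the states are normalized), so the per-state error is exactly $\sum_{j \neq i} \|\sqrt{G}^{(ij)}\|_2^2$.

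With this identity in hand, the second step is to bound each off-diagonal Hilbert-Schmidt block norm $\|\sqrt{G}^{(ij)}\|_2$ of the matrix square root in terms of the off-diagonal blocks of $G$ itself. The key inequality I would aim for is that $\|\sqrt{G}^{(ij)}\|_2^2$ is controlled by the fidelity $F(\rho_i,\rho_j)$; indeed the $(i,j)$ block of $G$ is the matrix $G^{(ij)}_{kl} = \sqrt{\lambda_{ik}}\sqrt{\lambda_{jl}}\braket{\psi_{ik}|\psi_{jl}}$, whose trace norm $\|G^{(ij)}\|_1 = \|\sqrt{\rho_i}\sqrt{\rho_j}\|_1 = F(\rho_i,\rho_j)$. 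The task is therefore to pass from the off-diagonal blocks of $\sqrt{G}$ back to those of $G$, which is precisely the ``different route'' alluded to after the statement. I expect to use an integral representation or a perturbation/operator-monotonicity argument for the square root, of the form $\sqrt{G} = \frac{1}{\pi}\int_0^\infty G (G + t I)^{-1} t^{-1/2}\, dt$, so that the off-diagonal block of $\sqrt{G}$ can be related to off-diagonal blocks of $G$ smeared against resolvents, whose norms are controlled because $G$ is positive semidefinite.

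The third step is to combine these bounds, summing over $j \neq i$ and then maximizing over $i$, to produce $\sum_{i \neq j} F(\rho_i, \rho_j)$ on the right-hand side. Since fidelity is symmetric, $\sum_{j \neq i} F(\rho_i,\rho_j) \le \sum_{i \neq j} F(\rho_i,\rho_j)$ for every fixed $i$, so the maximum over $i$ is itself bounded by the full double sum; this accounts for the appearance of the symmetric sum over all ordered pairs rather than just those involving the worst index.

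The main obstacle I anticipate is the second step: controlling the off-diagonal Hilbert-Schmidt norm of $\sqrt{G}$ by that of $G$ is not a trivial consequence of any sub-multiplicativity, because the square root is not a block-diagonal-preserving operation and $\|\sqrt{G}^{(ij)}\|_2$ can in principle be larger than a naive bound would suggest. Getting a clean factor here (and in particular recovering the fidelity $F(\rho_i,\rho_j) = \|\sqrt{\rho_i}\sqrt{\rho_j}\|_1$ as the correct quantity, with the Hilbert-Schmidt-to-trace-norm conversion absorbed correctly) is where the argument has to be careful, and is presumably where the comparison to the method of~\cite{barnum02} becomes relevant.
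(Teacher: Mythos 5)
Your first and third steps match the paper: the identity $1 - \|\sqrt{G}^{(ii)}\|_2^2 = \sum_{j \neq i} \|\sqrt{G}^{(ij)}\|_2^2$ (from $\tr \rho_i = 1$) and the relaxation of the maximum over $i$ to the full double sum are exactly how the argument is set up, as is the computation $\|G^{(ij)}\|_1 = \|\sqrt{\rho_i}\sqrt{\rho_j}\|_1 = F(\rho_i,\rho_j)$. But the second step, which you correctly identify as the crux, is left as a hope rather than an argument, and the route you sketch for it is not the one that works. You aim to bound each block $\|\sqrt{G}^{(ij)}\|_2^2$ \emph{individually} by $F(\rho_i,\rho_j)$ via an integral representation of the square root; this is a strictly stronger claim than the lemma requires, it is not established by anything you write, and it is not clear it can be pushed through block-by-block (the resolvents $(G+tI)^{-1}$ mix all blocks, so the off-diagonal block of $\sqrt{G}$ is not controlled by the single corresponding block of $G$ in any obvious way).

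The missing idea is to bound the off-diagonal mass \emph{in aggregate}. Let $\Lambda$ be the block-diagonal part of $G$ and $\Delta = G - \Lambda$ the off-diagonal part. Since $\sqrt{\Lambda}$ is block-diagonal, $\sum_{i \neq j} \|\sqrt{G}^{(ij)}\|_2^2 \le \|\sqrt{G} - \sqrt{\Lambda}\|_2^2$. Now invoke the matrix square-root perturbation inequality $\|\sqrt{A} - \sqrt{B}\| \le \|\sqrt{|A-B|}\|$, valid for positive $A,B$ and any unitarily invariant norm (Bhatia, Theorem X.1.3); with $A = G$, $B = \Lambda$ this gives $\|\sqrt{G}-\sqrt{\Lambda}\|_2^2 \le \|\sqrt{|\Delta|}\|_2^2 = \|\Delta\|_1 \le \sum_{i\neq j}\|G^{(ij)}\|_1 = \sum_{i\neq j}F(\rho_i,\rho_j)$, using the triangle inequality for the trace norm over blocks. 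This single global inequality replaces your entire second step; without it (or an equivalent), the proposal does not close.
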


\begin{proof}
Set $\Lambda = \sum_i \proj{i} \otimes G^{(ii)}$, $\Delta = \sum_{i \neq j} \ket{i} \bra{j} \otimes G^{(ij)}$, such that $G = \Lambda + \Delta$. We are interested in upper-bounding
\[ P_E(S) = \max_i \sum_{j\neq i} \|\sqrt{G}^{(ij)}\|_2^2 \le \sum_{i,j:i\neq j} \|\sqrt{G}^{(ij)}\|_2^2 \le \| \sqrt{G} - \sqrt{\Lambda}\|_2^2, \]
where the second inequality holds because $\sqrt{\Lambda}^{(ij)}=0$ for $i \neq j$.
For any unitarily invariant norm $\|\cdot\|$ and any positive operators $A$, $B$, we have~\cite[Theorem X.1.3]{bhatia97}
\[ \|\sqrt{A} - \sqrt{B} \| \le \| \sqrt{|A-B|} \|. \]
Hence
\[ \|\sqrt{G} - \sqrt{\Lambda} \|_2^2 = \|\sqrt{\Lambda + \Delta} - \sqrt{\Lambda} \|_2^2 \le \| \sqrt{|\Delta|} \|_2^2 = \|\Delta\|_1 \le \sum_{i \neq j} \|G^{(ij)}\|_1 = \sum_{i \neq j} F(\rho_i,\rho_j), \]
where $\|\cdot\|_1$ denotes the trace norm. The last equality follows from
\[ G^{(ij)} = \sum_{k,l} \sqrt{\lambda_{ik}} \sqrt{\lambda_{jl}} \braket{\psi_{ik}|\psi_{jl}} \ket{k}\bra{l} = \left(\sum_k \sqrt{\lambda_{ik}} \ket{k} \bra{\psi_{ik}} \right) \left( \sum_l \sqrt{\lambda_{jl}} \ket{\psi_{jl}} \bra{l}  \right), \]
which implies that
\[ \|G^{(ij)}\|_1 = \left\| \left(\sum_k \sqrt{\lambda_{ik}} \ket{\psi_{ik}} \bra{\psi_{ik}} \right) \left( \sum_l \sqrt{\lambda_{jl}} \ket{\psi_{jl}} \bra{\psi_{jl}}  \right) \right\|_1 = \|\sqrt{\rho_i}\sqrt{\rho_j}\|_1 = F(\rho_i,\rho_j) \]
by unitary invariance of the trace norm and orthonormality of the states $\{\ket{\psi_{ik}}\}$ for each $i$.
\end{proof}

Lemma \ref{lem:pe} implies that if $F(\rho_i,\rho_j) \le 1/(3n^2)$ for all $i\neq j$, then $P_E(S) \le 1/3$. This can be seen as a generalisation of a folklore result proven by Ambainis and de Wolf~\cite{ambainis14c}, albeit with a somewhat worse constant. The result of~\cite{ambainis14c} was only shown for pure states, but states that if $F(\rho_i,\rho_j) \le 1/n^2$ for all $i\neq j$, then $P_E(S) \le 1/3$.

\begin{repthm}{thm:mixed}
If $F(\rho_i, \rho_j) \le 1-\epsilon$ for all pairs of distinct states $\rho_i,\rho_j \in S$, then the worst-case state discrimination problem can be solved with failure probability $\delta$ by applying the PGM to $O(\log (n/\delta)/\epsilon)$ copies of $\rho$.
\end{repthm}

\begin{proof}
Let $S' = \{ \rho_i^{\otimes k} : i \in \{1,\dots,n\} \}$; then by Lemma \ref{lem:pe},
\[ P_E(S') \le \sum_{i \neq j} F(\rho_i^{\otimes k},\rho_j^{\otimes k}) \le n(n-1) \max_{i\neq j} F(\rho_i, \rho_j)^k \le n^2 (1-\epsilon)^k \le n^2 e^{-k\epsilon} \]
so it is sufficient to take $k = \lceil (2/\epsilon) \ln (n/\delta) \rceil$ to achieve failure probability at most $\delta$.
\end{proof}


\section{Improved bounds for pure states}

We now find an alternative bound which is only good for pure (or not too mixed) states, but which further improves on~\cite{harrow12a} by not having any (explicit) dependence on $n$. The bound states that, for any set of pure states whose pairwise fidelities are bounded above by $1-\epsilon$, where $\epsilon > 0$ is a constant, the state discrimination problem can be solved with $\widetilde{O}(\|G\|)$ copies of the unknown state, where $\|\cdot\|$ is the operator norm. The bound is based on the following technical lemma:

\begin{lem}
\label{lem:bound1}
For all $i$,
\[ \|G\|^{-1} \tr \rho_i^2 \le \tr \mu_i \rho_i \le \|G^{-1}\| \tr \rho_i^2. \]
\end{lem}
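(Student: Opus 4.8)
The plan is to reduce both sides of the inequality to traces of positive semidefinite operators and then sandwich $\mu_i$ between scalar multiples of $\rho_i$. First I would rewrite the quantity of interest: since $\tr \mu_i \rho_i = \tr \Sigma^{-1/2} \rho_i \Sigma^{-1/2} \rho_i$, splitting each $\rho_i$ as $\rho_i^{1/2}\rho_i^{1/2}$ and using cyclicity of the trace lets me express this as $\tr X^2$, where $X := \rho_i^{1/2} \Sigma^{-1/2} \rho_i^{1/2}$ is Hermitian and positive semidefinite. The right-hand side is already $\tr \rho_i^2$, so the lemma becomes a two-sided comparison between $\tr X^2$ and $\tr \rho_i^2$.

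The engine of the proof is the observation that $\Sigma$ and $G$ share the same nonzero spectrum. Writing $V$ for the operator whose columns are the weighted eigenvectors $\sqrt{\lambda_{ij}} \ket{\psi_{ij}}$, one has $\Sigma = \sum_{ij} \lambda_{ij} \proj{\psi_{ij}} = V V^\dagger$ while $G = V^\dagger V$, so their nonzero eigenvalues coincide. Hence on $\supp \Sigma$ we have the operator inequalities $\|G\|^{-1} I \preceq \Sigma^{-1} \preceq \|G^{-1}\| I$, and taking square roots, $\|G\|^{-1/2} I \preceq \Sigma^{-1/2} \preceq \|G^{-1}\|^{1/2} I$. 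Conjugating by $\rho_i^{1/2}$ (whose range lies in $\supp \Sigma$, since $\Sigma \succeq \rho_i$) then yields
\[ \|G\|^{-1/2} \rho_i \preceq X \preceq \|G^{-1}\|^{1/2} \rho_i. \]

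From here I would finish with two applications of the fact that $\tr(PQ) \ge 0$ whenever $P, Q \succeq 0$. For the upper bound, writing $c = \|G^{-1}\|^{1/2}$, the inequality $0 \preceq X \preceq c\rho_i$ gives $\tr X^2 = c\,\tr X \rho_i - \tr X(c\rho_i - X) \le c\, \tr X \rho_i$, and a second application with $\rho_i$ in place of $X$ gives $\tr X \rho_i \le c\, \tr \rho_i^2$; combining these bounds $\tr \mu_i \rho_i = \tr X^2 \le c^2 \tr \rho_i^2 = \|G^{-1}\| \tr \rho_i^2$. The lower bound follows identically from $X \succeq \|G\|^{-1/2}\rho_i$, with the inequalities reversed.

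The step I expect to require the most care is the passage from a bound on $\|\Sigma^{-1}\|$ to a bound on the trace $\tr \mu_i \rho_i$: because $\rho_i$ and $\Sigma$ do not commute, one cannot simply pull the scalar $\|\Sigma^{-1}\|$ out of the sandwich $\Sigma^{-1/2}\rho_i\Sigma^{-1/2}$. The device that circumvents this is precisely the symmetrised form $X = \rho_i^{1/2}\Sigma^{-1/2}\rho_i^{1/2}$ together with the two-step $\tr(PQ)\ge 0$ argument, which respects the noncommutativity while still extracting the two factors of $\|G^{-1}\|^{1/2}$ (respectively $\|G\|^{-1/2}$). I would also take care to state all operator inequalities on $\supp\Sigma$ and to record $\supp\rho_i\subseteq\supp\Sigma$, so that $\Sigma^{-1}$, its square root, and the conjugation by $\rho_i^{1/2}$ are all well defined.
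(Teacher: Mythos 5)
Your proof is correct, but it takes a genuinely different route from the paper's. The paper stays entirely in the Gram-matrix picture: it uses the identities $\tr \mu_i \rho_i = \|\sqrt{G}^{(ii)}\|_2^2$ and $\tr \rho_i^2 = \|G^{(ii)}\|_2^2$ established in the definitions, and then applies the operator inequality $\lambda_{\min}(\sqrt{G})\sqrt{G} \preceq G \preceq \lambda_{\max}(\sqrt{G})\sqrt{G}$, which survives compression to the $(i,i)$ block and passage to the Frobenius norm because unitarily invariant norms are monotone on the positive semidefinite order. You instead work in the state space with $\Sigma = \sum_i \rho_i$ directly: the key move is the observation that $\Sigma = VV^\dagger$ and $G = V^\dagger V$ share their nonzero spectra, so $\|\Sigma\| = \|G\|$ and $\|\Sigma^{-1}\| = \|G^{-1}\|$, after which the symmetrised operator $X = \rho_i^{1/2}\Sigma^{-1/2}\rho_i^{1/2}$ and two applications of $\tr(PQ) \ge 0$ for $P,Q \succeq 0$ deliver both bounds. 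All the steps check out, including the care taken with $\supp \rho_i \subseteq \supp \Sigma$ and the fact that conjugation preserves the Loewner order. What each approach buys: the paper's argument is shorter given the block formula for the PGM it has already set up, and it generalises naturally to the off-diagonal block estimates used in Lemma \ref{lem:pe}; yours is more self-contained and elementary (only $\tr(PQ) \ge 0$ is needed, no norm monotonicity under the positive semidefinite order), and it makes transparent why the operator norm of $G$, rather than any other norm, is the right quantity --- it is simply the largest eigenvalue of $\Sigma$. Your route is essentially an operator-inequality version of a H\"older-type argument on $\Sigma^{-1/2}\rho_i\Sigma^{-1/2}\rho_i$, and both are perfectly valid proofs of the lemma.
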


\begin{proof}
We have
\[ \lambda_{\min}(\sqrt{G}) \sqrt{G} \le G \le \lambda_{\max}(\sqrt{G}) \sqrt{G} \]
in a positive semidefinite sense, where $\lambda_{\min}(G)$, $\lambda_{\max}(G)$ are the minimal and maximal eigenvalues of $G$. This inequality is preserved under projections and taking the 2-norm. So
\[ \|G\| \tr \mu_i \rho_i = \lambda_{\max}(\sqrt{G})^2 \|\sqrt{G}^{(ii)}\|_2^2 \ge \|G^{(ii)}\|_2^2 = \tr \rho_i^2, \]
which is the lower bound of the lemma, and
\[ \|G^{-1}\|^{-1} \tr \mu_i \rho_i = \lambda_{\min}(\sqrt{G})^2 \|\sqrt{G}^{(ii)}\|_2^2 \le \|G^{(ii)}\|_2^2 = \tr \rho_i^2, \]
which is the upper bound.
\end{proof}

If one assumes a uniform distribution on the states in $S$ and that they are pure, the lower bound in Lemma \ref{lem:bound1} is a corollary of \cite[Lemma 2]{montanaro07a}.

\begin{repthm}{thm:pure}
If all states in $S$ are pure, then the worst-case state discrimination problem can be solved with success probability at least $\|G\|^{-1}$ by applying the PGM to one copy of $\rho$. If additionally $\tr \rho_i \rho_j \le 1-\epsilon$ for all pairs of distinct states in $S$, then there is an explicit measurement strategy which, applied to
\[ O((\|G\|/ \epsilon) (\log 1/\delta) \log(\|G\|/\delta)) = \widetilde{O}(\|G\|/\epsilon) \]
 copies of $\rho$, solves the worst-case state discrimination problem with failure probability $\delta$.
\end{repthm}

\begin{proof}
The first part is immediate from Lemma \ref{lem:bound1}. For the second part, apply the PGM separately to $k = \lceil \|G\| \ln (2/\delta) \rceil$ copies of $\rho$, obtaining outcomes $i_1,\dots,i_k$. The probability that the outcome corresponding to $\rho$ is not among the outcomes obtained is at most
\[ (1- \|G\|^{-1})^k \le e^{-\|G\|^{-1} k } \le \delta/2. \]
Then copies of $\rho$ are tested via $l = \lceil \ln (2k/\delta) / \epsilon \rceil$ uses of each of the accept/reject measurements that project onto $\rho_{i_j}$, for each outcome $i_j$. If all measurements in the $j$'th group accept, then the protocol outputs $i_j$. If no such group of measurements all accept, the protocol outputs ``fail''.

Each accept/reject measurement accepts $\rho_{i_j}$ with certainty. By the fidelity constraint, the probability that all the measurements for a given $j$ accept $\rho$ if $\rho \neq \rho_{i_j}$ is at most $(1-\epsilon)^{l} \le e^{-\epsilon l} \le \delta / (2k)$. By a union bound, the probability that any group of measurements incorrectly all accepts is at most $\delta / 2$. Thus the probability that the whole protocol fails is at most $\delta$. The overall number of copies of $\rho$ used is at most $k(l+1) = O((\|G\|/ \epsilon) (\log 1/\delta) \log(\|G\|/\delta))$ as claimed.
\end{proof}

Note the following additional points about Theorem \ref{thm:pure}:

\begin{enumerate}
\item For pure states, Theorem \ref{thm:mixed} is a corollary of Theorem \ref{thm:pure}. Letting $G$ be the Gram matrix of the states $\{\ket{\psi_i}^{\otimes k}\}$,
\[ \|G\| \le 1 + (n-1) \max_{i \neq j} |G_{ij}| = 1 + (n-1) \max_{i \neq j} |\braket{\psi_i|\psi_j}|^k. \]
If $|\braket{\psi_i|\psi_j}| \le 1-\epsilon$ for all $i \neq j$, this quantity becomes arbitrarily close to 1 for sufficiently large $k = O((\log n)/\epsilon)$.

\item It is not possible to obtain a similar result to Theorem \ref{thm:pure} for arbitrary mixed states (i.e.\ an upper bound only in terms of $\|G\|$), as can be seen by considering the states $\rho_i = I/n$. In this case one can calculate that $G = J \otimes (I/n)$, where $J_{ij} = 1$, $i,j\in \{1,\dots,n\}$; so $\|G\| = 1$, but the maximal worst-case success probability that can be achieved is $1/n$.

\item A case where Theorem \ref{thm:pure} is quite weak is a set of states whose pairwise inner products are all equal to some constant $c \in (0,1)$. Then $\|G\| = 1 + c(n-1)$, so Theorem \ref{thm:pure} states that the state discrimination problem could be solved with $\widetilde{O}(n)$ copies. It is obvious that this could be improved to $\widetilde{O}(\log n)$ copies, as taking $k$ copies maps $c \mapsto c^k$. But in this case we can explicitly calculate that $(\sqrt{G}_{ii})^2 \ge 1-c-O(1/n)$, so only $O(1)$ copies are required to achieve a high probability of success~\cite{montanaro07a}.
\end{enumerate}


\subsection*{Acknowledgements}

I would like to thank Aram Harrow, Andreas Winter and Jon Tyson for comments on previous versions. I acknowledge support from the QuantERA ERA-NET Cofund in Quantum Technologies implemented within the European Union's Horizon 2020 Programme (QuantAlgo project), EPSRC Early Career Fellowship EP/L021005/1, and EPSRC grant EP/R043957/1. This project has received funding from the European Research Council (ERC) under the European Union's Horizon 2020 research and innovation programme (grant agreement No.\ 817581). No new data were created during this study.



\bibliographystyle{plain}
\bibliography{../../thesis}

\end{document}